\documentclass{article}

\usepackage{arxiv}

\usepackage[utf8]{inputenc} 
\usepackage[T1]{fontenc}    
\usepackage{hyperref}       
\usepackage{url}            
\usepackage{booktabs}       
\usepackage{amsfonts}       
\usepackage{nicefrac}       
\usepackage{microtype}      
\usepackage{lipsum}		
\usepackage{graphicx}
\usepackage{natbib}
\usepackage{doi}

\usepackage{amsmath, amsthm, amssymb}
\usepackage{graphicx}
\usepackage{subfigure}

 \newtheorem{thm}{Theorem}[section]
 \newtheorem{cor}[thm]{Corollary}
 \newtheorem{lem}[thm]{Lemma}
 \newtheorem{prop}[thm]{Proposition}
 \theoremstyle{definition}
 \newtheorem{defn}[thm]{Definition}
 \theoremstyle{remark}

 \numberwithin{equation}{section}

\newcommand{\m}{\mathcal}

\newcommand{\mf}{\mathfrak}
\newcommand{\st}{\hspace{0.10cm} | \hspace{0.10cm}}

\newcommand{\bdefi}{\begin{defn}}
\newcommand{\edefi}{\end{defn}}

\newcommand{\btw}{\begin{thm}}
\newcommand{\etw}{\end{thm}}

\newcommand{\blem}{\begin{lem}}
\newcommand{\elem}{\end{lem}}

\newcommand{\bcn}{\begin{con}}
\newcommand{\ecn}{\end{con}}

\newcommand{\bfc}{\begin{fc}}
\newcommand{\efc}{\end{fc}}

\newcommand{\n}{\noindent}

\newcommand{\beqq}{\begin{equation*}}
\newcommand{\eqq}{\end{equation*}}

\newcommand{\bc}{\begin{center}}
\newcommand{\ec}{\end{center}}

\newcommand{\bit}{\begin{itemize}}
\newcommand{\eit}{\end{itemize}}

\newcommand{\la}{\langle}
\newcommand{\ra}{\rangle}

\title{Three-Dimensional Affine Spatial Logics}

\author{ \href{https://orcid.org/0000-0003-4170-8665}{\includegraphics[scale=0.06]{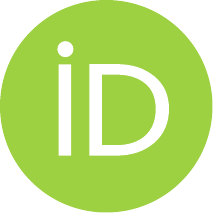}\hspace{1mm}Adam Trybus} \\
	Institute of Philosophy\\
	Jagiellonian University\\
	52 Grodzka St., Kraków 31-044, Poland\\
	\texttt{adam.trybus@uj.edu.pl} }

\renewcommand{\shorttitle}{Three-Dimensional Affine Spatial Logics}

\hypersetup{
pdftitle={Three-Dimensional Affine Spatial Logic},
pdfsubject={cs.LO},
pdfauthor={Adam Trybus},
pdfkeywords={spatial logic, affine geometry, qualitative spatial reasoning},
}

\begin{document}
\maketitle

\begin{abstract}
	We focus on a branch of region-based spatial logics dealing with affine geometry. The research on this topic is scarce: only a handful of papers investigate such systems, mostly in the case of the real plane. Our long-term goal is to analyse certain family of affine logics with inclusion and convexity as primitives interpreted over real spaces of increasing dimensionality. In this article we show that logics of different dimensionalities must have different theories, thus justifying further work on different dimensions. We then focus on the three-dimensional case, exploring the expressiveness of this logic and consequently showing that it is possible to construct formulas describing a three-dimensional coordinate frame. The final result, making use of the high expressive power of this logic, is that every region satisfies an affine complete formula, meaning that all regions satisfying it are affine equivalent.
\end{abstract}

\keywords{spatial logic \and affine geometry \and qualitative spatial reasoning}

\section{Introduction}

\paragraph{History and philosophy} Spatial logic, as the term has been used, can be viewed as built using any first-order language with geometrical interpretation, where variables range over geometrical entities and relation and function symbols are interpreted as geometrical relations and functions. However, in most instances, the name accompanies region-based, rather that point-based systems, meaning that the geometrical entities variables range over are not points but some collections of point. Although the name itself might be an invention of the early twenty-first century (see \cite{space}), spatial logics have rich and diverse background: after all, the notion of space is a staple in philosophy.\footnote{See \cite{casativarzi} for an excellent introduction into the intersection of philosophical and formal approaches to spatial reasoning.} One of the most famous treatments of space was proposed by Kant, who in \emph{Critique of Pure Reason} (\cite{kant}) argues that geometry is both synthetic and a priori (using his, now well-known, labels). Since the argument seem to be hinging on the existence of only one type of geometry, the development of new, non-Euclidean geometries in the nineteenth century was considered as a threat to Kantian views. Bertrand Russell, in one of his earliest publications, tries to defend Kant's approach (see \cite{Rus97}. Moreover, the approach of many nineteenth-century geometers was very philosophically informed. For example, Moritz Pasch --- most famous perhaps for figuring out the gaps in Euclid's reasoning --- viewed geometry as having a decidedly emprical basis (see e.g. \cite{pasch}). As a consquence, his analysis starts not with the Euclidean but rather with, what now is known as, affine geometry: one, where the notion of metric is not important. Thus, affine geometry can be viewed as emphasising \emph{qualitative} rather than \emph{quantitative} aspects of geometrical thinking. This theme is also important for contemporary researchers working within the so-called \emph{qualitative spatial reasoning} field. Logical formalisms that came from that field are sometimes called spatial logics, and those spatial logics that apply affine notions are the focus of our article. Although there has been some interest in such type of logic (see e.g. \cite{DavisGottsCohn:1999}, \cite{bennett} as well as \cite{Pratt:1999}, \cite{trybus}), it should be said that it pales in comparison with the research on topological spatial systems (see e.g. \cite{space} for a wide selection of topic related to topological formalisms). We believe that while there are good reasons for topological analysis, philosophical investigations provide additional justification for extending the work on affine systems. For example, Bertrand Russell, no doubt influenced by Pasch and others,\footnote{Notably by M. Pieri and F. Klein. For a more detailed description of their work see: \cite{pieri} and \cite{klein}, respectively.} took up the idea of the importance of non-numerical, qualitative, geometry and argued extensively for the primacy of projective and affine notions over the Euclidean ones (see \cite{Rus97}, \cite{Rus03} and \cite{TrybusRuss} for a discussion). Moreover, while Alfred~N. Whitehead's complex philosophical ideas influenced the development of region-based theories of space (see \cite{whitehead}), which are closely related to topology and mereology (a theory of part-whole relations, see \cite{simons}), he also devoted his attention to affine and projective geometry in \cite{Whi07} and \cite{Whi13} respecively. Finally, affine geometry can be said to be an intermediate geometry between the projective and Euclidean ones. Hence, it retains the status of non-numerical geometry and at the same time is less general than projective geometry and topology, thus remaining closer to our every-day experiences.

The region-based affine spatial logics --- the focus of our article --- are not the first attempts at logical analysis of this type of geometry. Alfred Tarski mentions affine geometry in his comparison between the developments in logic and geometry (see \cite{tarski:notions}, which is a written account of a talk he gave, which in turn relfects his ideas from before the war). Moreover, together with his student Lesław Szczerba, Tarski worked on point-based affine spatial logics (see \cite{tarskiszczerba}), which built on Tarski's earlier work on formalising Euclidean geometry (see \cite{tarskieuclid} and \cite{McFarland} for a detailed look at Tarski's involvement in geometry).\footnote{The article \cite{Nagel} is a fascinating summary of the influence of geometrical results on the development of logic.}

\paragraph{Constructing a spatial logic} If we were to custom-build a spatial logic, the first problem we are going to face is the choice of underlying geometric space. Many approaches have been studied, in most of them however either $\mathbb{R}^n$ for some $n$ or some more general topological space is considered. Having set on the underlying geometric space, say $X$, we are faced with another decision. Should the variables range over elements of $X$ or some subset $S \subseteq 2^X$? In the first case we would be talking about \emph{point-based} spatial logics, in the second about \emph{region-based} spatial logics, which is our focus here. As mentioned above, we place our work in the \emph{qualitative spatial reasoning} (QSR) subarea of symbolic AI. The adjective \emph{qualitative} in this context means that all the primitive relations and functions are of non-numerical nature. For example, consider a language with a single relation symbol $C$ understood as the contact relation. Intuitively two sets are in contact if their boundaries share at least one point. This spatial logic was investigated under many guises, most notably within the qualitative spatial reasoning paradigm. We are now faced with the following question: precisely what \emph{sort} of regions should we consider? We could obviously decide to consider all $S \subseteq 2^X$ for a given space $X$. Are there any reasons to consider a special class of regions rather than give them all an equal footing? One such reason is the admittedly vague notion of \emph{well-behavedness}. First of all, to smooth out the reasoning with regions, we would like to weed out as many ``special cases'' as possible. Assuming we are working with some topological space, this can be done by considering only \emph{regular} subsets of that space as plausible region-candidates. This gets rid of many a ``strange'' set e.g. of fractal nature. In the next step we need to decide whether we consider our regions to contain their boundaries or not. In the first case we end up with regular \emph{closed} sets and in the second case with regular \emph{open} sets. From a formal point of view, this is not an essential choice. In the remainder we will consider mainly regular open variants (and everything we say can be applied \emph{mutatis mutandis} to the regular closed case). The class of all regular open subsets of some topological space is already a good choice for the well-behaved regions.\footnote{This is by no means the final word in the quest for well-behavedness, see \cite{DanaScott}.} Apart from what has been mentioned already, by a well-known result the elements of the class of regular open subsets of some topological space form a Boolean Algebra, that is, operations of sum, product and complement of regular open sets conform to the laws of Boolean Algebra. We can do better still. We can look inside this class for some more refined region candidates. As is customary, we single out two classes: (regular open) \emph{polygons} and (regular open) \emph{rational polygons} (limiting ourselves to rational numbers). The fact that it is countable, makes the second subclass especially interesting from the point of view of computer science applications. The choice of geometric space and either point- or region-based approach dictates the choice of relations and functions that we are presented with. Within the qualitative spatial reasoning paradigm, non-numerical predicates on regions are considered, most notably contact and connectedness. Traditionally, spatial logics over languages containing relation and function symbols interpreted as relations and functions invariant under certain geometric transformations (topological, Eucidean etc.) are called accordingly as e.g. Euclidean, topological (spatial) logic. We follow this convention here. For example, consider an \emph{affine} spatial logic constructed in the following manner. Start with a language with two primitive symbols $\mf{conv}$ and $\leq$. Let them denote the following predicates defined on regular open rational polygonal subsets of $\mathbb{R}^2$. The symbol $\mf{conv}(a)$ is to be understood as ``region $a$ is convex'' and the symbol $a \leq b$ as ``region $a$ is a subset of region $b$''. It is an affine spatial logic, since convexity is an affine-invariant property. This spatial logic is in fact one that we are concerned the most within this article. The last choice made in constructing a spatial logic concerns the syntactical complexity of the language we want to use. In our case, we work with standard first-order logic.

\paragraph{The focus of this article} The order of the article is as follows. After some more technical remarks regarding region-based theories of space and affine geometry, finally definining the structures that are important for us. Then, we describe in short the most important results obtained in \cite{trybus}. This is done partially to introduce certain (visual) intuitions that are easier to grasp in the two dimensional case but that carry over, to some extent, to the three-dimensional case. Next, we describe some more general results regarding the family of structures that we have defined: namely that they all have different theories. Finally, we fix our attention on the three-dimensional extension of the two-dimensional logic analysed in \cite{trybus}. We prove a number of expressiveness results that are helpful in establishing a result similar to one of the main theorems of \cite{Pratt:1999}, regarding the existence of formulas that are satisfied only by affine-equivalent regions.

\section{General setup}

Let $\m{L}_{\mf{conv},\leq}$  be a first-order language with two predicates: binary $\leq$ and unary $\mf{conv}$. We work with an $\m{L}_{\mf{conv},\leq}$-structure with variables ranging over the set of regular open rational polygons of the real plane; $\leq$ interpreted as the inclusion relation and $\mf{conv}$ as a property of being convex. We start with defining a notion of a \emph{regular open set}.

\bdefi Let $S$ be a subset of some topological space. We denote the \emph{interior} of $S$ by $S^0$ and the \emph{closure} of $S$ by $S^-$. $S$ is called \emph{regular open} if $S = (S)^{-^0}$.\edefi

The following result is standard.

\begin{prop} The set of regular open sets in $X$ forms a Boolean algebra $RO(X)$ with top and bottom defined by $1 = X$ and $0 = \emptyset$, and Boolean operations defined by $a\cdot b = a \cap b$, $a + b = (a \cup b)^{-0}$ and $-a = (X \setminus a)^{0}$.\end{prop}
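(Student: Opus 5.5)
The plan is to build the Boolean algebra structure directly from the operator $r(S) = S^{-0}$ (closure followed by interior). I would first establish its basic properties on arbitrary subsets $S,T$ of $X$:
\begin{itemize}
\item[(i)] $r$ is monotone, since closure and interior both are;
\item[(ii)] for open $U$ we have $U \subseteq r(U)$, because $U = U^0 \subseteq (U^-)^0$;
\item[(iii)] $r$ is idempotent, i.e.\ $r(r(S)) = r(S)$, so $r(S)$ is always regular open.
\end{itemize}
For (iii), $r(S)^- \subseteq S^-$ gives $r(r(S)) \subseteq S^{-0} = r(S)$. Conversely, $r(S)$ is open, so (ii) gives $r(S) \subseteq r(r(S))$. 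A useful dual fact is that for open $U$ the set $(X\setminus U)^0 = X \setminus U^-$ is regular open. This follows by applying the same idempotence argument to the complement: $X\setminus U^-$ is the interior of a closed set, and interiors of closed sets are exactly the sets of the form $r(\cdot)$.

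Next I would check closure of $RO(X)$ under the three operations.
\begin{itemize}
\item[(a)] Meet. If $a,b$ are regular open, then $a\cap b$ is open with $a\cap b \subseteq (a\cap b)^{-0}$. Conversely, $(a\cap b)^{-0} \subseteq a^{-0}\cap b^{-0} = a\cap b$. So $a\cap b$ is regular open.
\item[(b)] Join. $a+b = r(a\cup b)$ is regular open by (iii).
\item[(c)] Complement. $-a = X\setminus a^-$ is regular open by the dual fact.
\end{itemize}
The constants $\emptyset$ and $X$ are clearly regular open.

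Then I would identify the lattice order. On $RO(X)$ with inclusion, $a\cap b$ is clearly the greatest lower bound. The join $r(a\cup b)$ is the least upper bound: it contains $a$ and $b$ by (ii), and if $c \supseteq a,b$ is regular open, then $r(a\cup b)\subseteq r(c)=c$ by (i). So $RO(X)$ is a bounded lattice with the stated operations, and it suffices to prove distributivity and complementation.

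For complementation, $a\cap -a = a\setminus a^- = \emptyset$. Also $a+(-a) = r(a\cup (X\setminus a^-))$. The closure of $a\cup(X\setminus a^-)$ contains $a^- \cup (X\setminus a^-) = X$, so this join is $X$.

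The main obstacle is distributivity, $a\cap(b+c) = (a\cap b)+(a\cap c)$. The inclusion $\supseteq$ follows from lattice properties, so only $\subseteq$ needs work. The key lemma is that for open $U$ and any $S$, $U\cap S^- \subseteq (U\cap S)^-$: any open neighbourhood $V$ of a point in $U\cap S^-$ gives an open set $V\cap U$ containing that point, which therefore meets $S$. From this lemma I would derive $U\cap r(S)\subseteq r(U\cap S)$ for open $U$, as follows. We have $U\cap S^{-0}$ open and contained in $U \cap S^- \subseteq (U\cap S)^-$, hence it lies in $(U\cap S)^{-0}$. Applying this with $U=a$ and $S=b\cup c$ gives
\[
a\cap(b+c) \subseteq r\big(a\cap(b\cup c)\big) = r\big((a\cap b)\cup(a\cap c)\big) = (a\cap b)+(a\cap c).
\]
Distributivity together with complementation makes $RO(X)$ a Boolean algebra.
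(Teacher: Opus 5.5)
Your proof is correct and complete. The basic properties (i)--(iii) of $r$, closure of $RO(X)$ under the three operations, and the identification of $\cap$ and $r(\cdot\cup\cdot)$ as meet and join for inclusion all go through. So do the complementation check and the distributivity step via $U\cap S^-\subseteq (U\cap S)^-$ for open $U$. You also rightly use the openness of $a$ in the dual fact, since $X\setminus S^-$ need not be regular open for an arbitrary set $S$.

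The paper gives no proof at all; it only calls the proposition ``standard''. So you are supplying the omitted argument, and yours is the usual direct topological verification.

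The other common textbook route is algebraic. For open $U$, $-U=(X\setminus U)^0=X\setminus U^-$ is the pseudocomplement in the Heyting algebra of open sets, and $--U=U^{-0}=r(U)$. Hence $RO(X)$ is exactly the set of regular elements, and Glivenko's theorem says these form a Boolean algebra with inherited meet and with join $--(a\cup b)$. That route is more general and explains structurally why only the join and complement need regularising. Yours is self-contained and uses only elementary topology. Your key lemma $U\cap r(S)\subseteq r(U\cap S)$ plays the role of the fact that $--$ preserves finite meets.
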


We restrict our attention to certain well-behaved regular open sets (see our remarks in the introduction). Let us start with a staple topological space used in \emph{QSR}, namely $\mathbb{R}^2$. Note that every line in $\mathbb{R}^2$ divides $\mathbb{R}^2$ into two domains, called \emph{half-planes}. Open half-planes are regular open sets, hence we can speak about the sums, products and complements of such half-planes in $RO(\mathbb{R}^2)$. By a \emph{regular open rational polygon} we mean a Boolean combination in $RO(\mathbb{R}^2)$ of finitely many half-planes bounded by lines with rational coefficients in $\mathbb{R}^2$. We denote the set of all regular open rational polygons in $\mathbb{R}^2$ by $ROQ(\mathbb{R}^2)$. Note that $ROQ(\mathbb{R}^2)$ is a Boolean subalgebra of $RO(\mathbb{R}^2)$. The notion of regular open rational polygon can be easily extended to that of a \emph{polytope}, when considering dimensions greater than $2$. In general, we write $ROQ(\mathbb{R}^n), n \in \mathbb{N}$, to denote the set of all regular open rational polytopes of dimension $n$ (all the mentioned results carry over from the two-dimensional case).

\bdefi A set $S \in \mathbb{R}^n$ is called \emph{convex} if for all $\lambda_1,\lambda_2 \in \mathbb{R}$, such that $\lambda_1, \lambda_2 \geq 0$ and $\lambda_1 + \lambda_2 = 1$ and for all $x \in S$, $$\lambda_{1} x + \lambda_{2}y \in S.$$\edefi

Finally, let us introduce the family of $n$-dimensional structures we will be interested in.

\bdefi Let $\mf{M}^n = \la ROQ(\mathbb{R}^n), \mf{conv}^{\mf{M}}, \leq^{\mf{M}} \ra$, where\\

$ \leq^{\mf{M}} = \{ \langle a,b \rangle \in ROQ(\mathbb{R}^n) \times ROQ(\mathbb{R}^n) \st a \subseteq
b\}$; \\

$ \mf{conv}^{\mf{M}} = \{ a \in ROQ(\mathbb{R}^n) \st a \mbox{ is convex} \}$.



\edefi

\n We sometimes refer to $\mf{M}^n $ as a rational model (of dimension $n$) and often drop the associated superscripts and subscripts if it does not lead to confusion.\\

\n In our exposition we follow the standard notational conventions. In particular, if $\phi$ is a formula, $\phi(x_1, \ldots, x_n)$ means that $\phi$ has at most $n$ variables: $x_1 , \ldots, x_n$. Also, if an $n$-tuple of regions $a_1, \ldots, a_n$ satisfy $\phi$ in $\mf{M}$, we write $\mf{M}\models\phi[a_1 , \ldots, a_n]$. However, we eschew formal clutter and whenever possible avoid dissecting the text with lemmas in favour of verbal description of results (especially the simpler ones) preserving the flow of thought. It should be noted, however, that all results can be easily converted into a more formalised description.\\

We also need a few simple facts regarding affine geometry. The following generalises the notion of an affine transformation in $\mathbb{R}^2$ to any dimension $n$. Recall that an $n \times n$ matrix $\mathcal{A}$ is invertible if there exists a $n \times n$ matrix $\mathcal{B}$ with $\mathcal{AB} = \mathcal{I}$, where $\mathcal{I}$ is the identity matrix; $\mathcal{A}$ is orthogonal if $\mathcal{AA}^T = \mathcal{I}$, where $\mathcal{A}^T$ is the transpose of $\mathcal{A}$.\\

\bdefi

An \emph{(n-dimensional) affine transformation} of $\mathbb{R}^n$ is a function $\tau: \mathbb{R}^n \to \mathbb{R}^n$ of the form $$ \tau(x) = \mathcal{A}x + b,$$ where $\mathcal{A}$ is an invertible $n \times n$ matrix and $b \in \mathbb{R}^n$.

\edefi

Note that affine transformations map straight lines to straight lines, preserve parallelism and ratios of lengths along parallel straight lines.\footnote{Hence, the properties of being a straight line, of lines being parallel and of being a ratio of a certain type are all affine-invariant.} Moreover, it is a standard result that the set of affine transformations forms a group under the operation of composition of functions. We say that two regions are \emph{affine-equivalent} if there is an affine transformation from one region to another (this notion naturally extends to sequences of regions).

\section{Two dimensions} The papers \cite{DavisGottsCohn:1999}, \cite{Pratt:1999} together with \cite{trybus} deal with various systems related to $\mf{M}^2$. The two-di\-men\-sio\-nal rational model turns out to be very expressive. Firstly note that the Boolean operations are clearly $\m{L}_{\mf{conv},\leq}$-definable (as are $0$ and $1$; for details see below, Theorem \ref{thm:boolean}). The paper \cite{Pratt:1999} showed that a number of interesting properties are definable in $\mf{M}^2$. It is easy to see that we can define a formula satisfied in the two-dimensional rational model if and only if the respective region is a half-plane (half-plane is the only region such that both it and its complement are convex). For the remainder of this paragraph, we use letters $l, m, n$ etc. (possibly with subscripts) to denote such half-planes but sometimes we abuse the convention and use the same symbols to denote the lines bounding the half-planes in question. With that in mind, \cite{Pratt:1999} showed that there is a formula involving two variables satisfiable in the two-dimensional rational model if and only if the two regions are half-planes with \emph{coincident} bounding lines. Similarly, there is a formula, such that the two regions involved are half-planes with \emph{parallel} bounding lines. Note that in affine geometry a \emph{coordinate frame} is defined as follows.

\bdefi\label{def:crf}

Let $l, m, n$ be any non-parallel, non-coincident lines with $l \cap m = \mathbf{O}$, $l \cap n = \mathbf{I}$ and $m \cap n = \mathbf{J}$. We say that $l, m, n$ form a \emph{coordinate frame}.

\edefi

Fig. \ref{fig:1} provides some examples of coordinate frames. Since the construction involves all the notions expressible in the two-dimensional model, we can ``talk'' about coordinate frames within that spatial logic.

\begin{figure}[h]
 \begin{center}
  
    \subfigure[]{\includegraphics[scale=0.32]{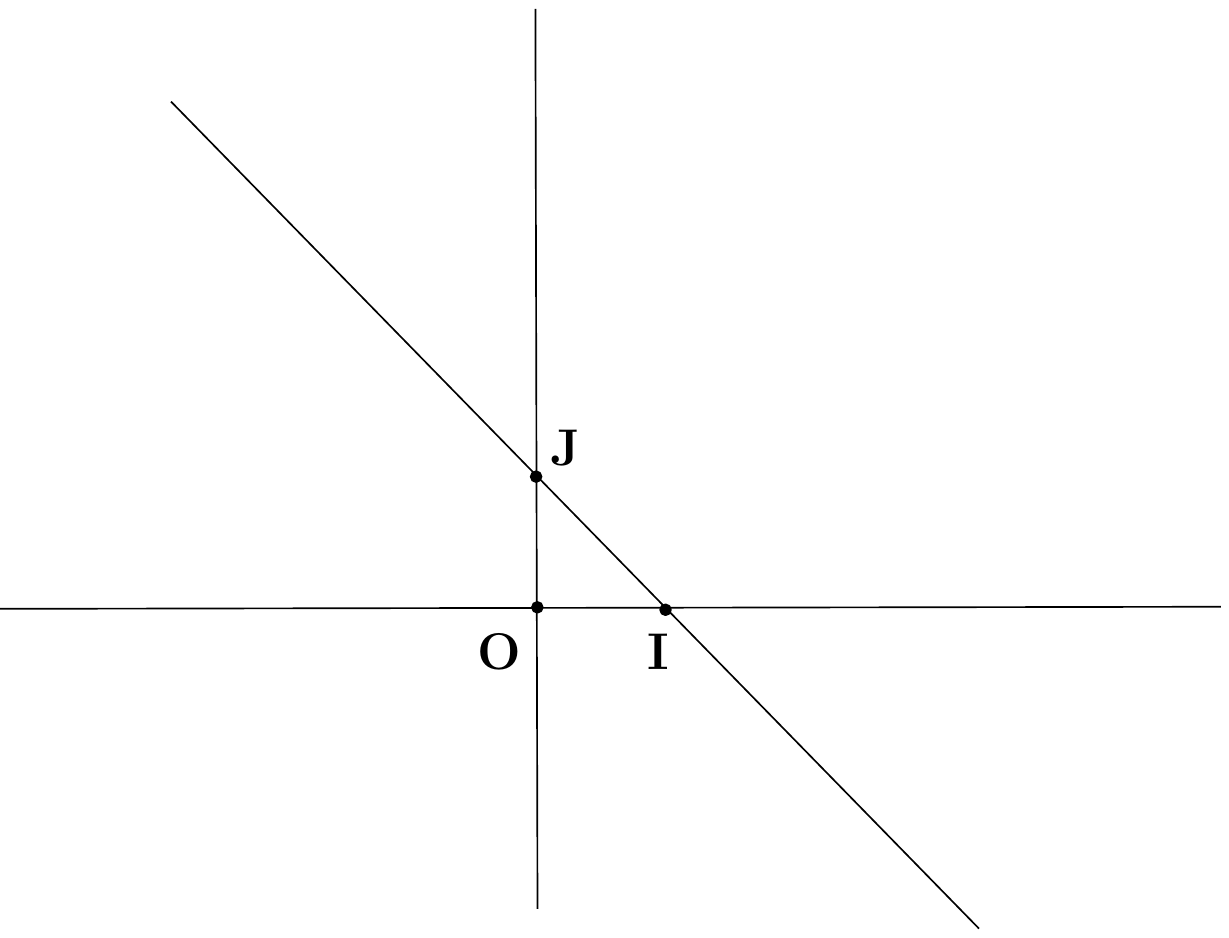}}
    \subfigure[]{\includegraphics[scale=0.33]{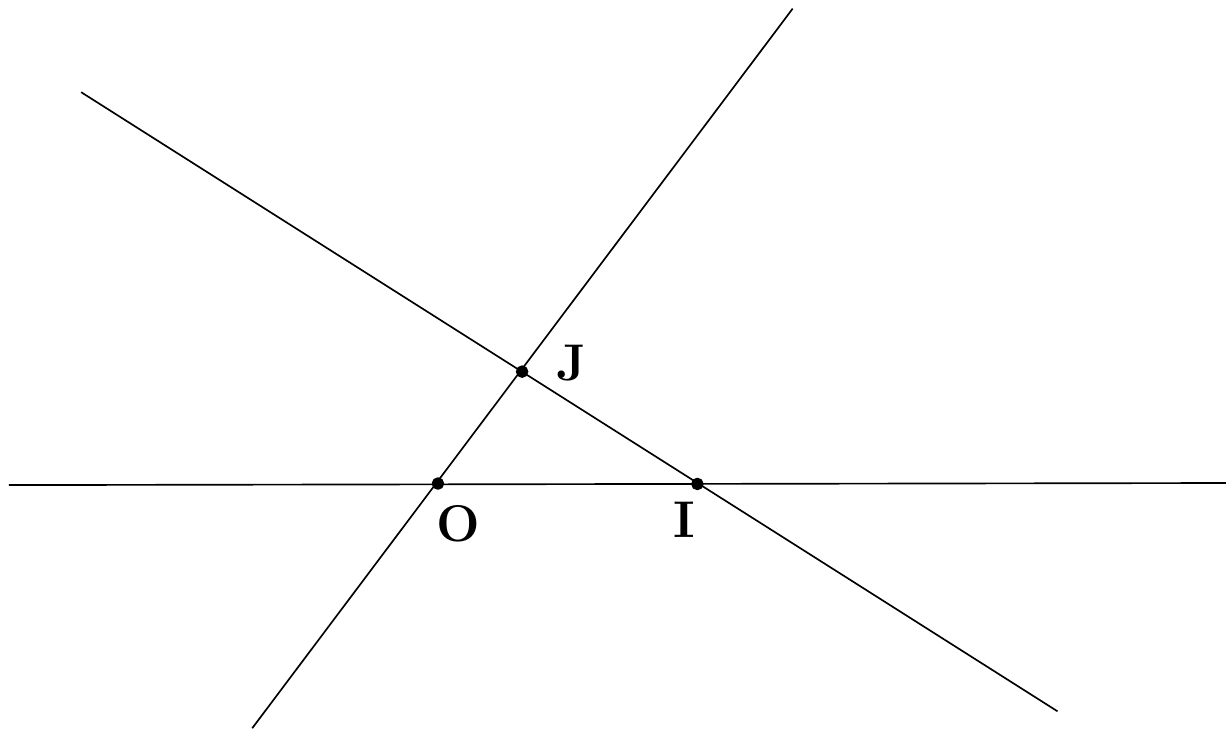}}
  
 \end{center}

 \caption{Example coordinate frames ($m$ is the horizontal and $l$ the vertical line in (a), the image in (b) can be thought of as an affine transformation of that of (a)). }

 \label{fig:1}
 \end{figure}

Now, the papers \cite{Pratt:1999} and \cite{trybus} show, in a sequence of results, that there exist formulas that allow fixing any rational half-plane with respect to a given coordinate frame. This is done by further exploring the expressivity of the model. Note that \cite{DavisGottsCohn:1999} shows that if two regions are affine-equivalent, then for certain affine spatial logics these satisfy the same formulas. An analogous theorem, relating the language $\m{L}_{\mf{conv},\leq}$  is proved in \cite{Pratt:1999}. Using the fixing formulas, the converse theorem is shown to hold in the case of $\mf{M}_2$.

\btw[\cite{Pratt:1999}]\label{thm:ian}
Every $n$\nolinebreak-tuple in $\mf{M}^2$ satisfies an $\m{L}_{\mf{conv},\leq}$-formula $\phi$ with the following property: any two $n$-tuples satisfying $\phi$ are affine-e\-qui\-va\-lent.
\etw

As indicated above, the proof relies on constructing certain formulas that allow us to ``talk'' about rational polygons and fixing their bounding lines in a certain manner, heavily relying on the expressivity results outlined above. (The paper \cite{trybus} provides details for this construction.) Moreover, the paper \cite{trybus} uses these ``fixing'' formulas to provide an axiom system for the two-dimensional model, which is proved to be sound and complete. The axioms express a number of properties e.g. that there are at least three regions such that lines bounding them form a coordinate frame or that if a region is a Boolean combination of half-planes, then it is convex if and only if it is a product of some of these half-planes. However, for the most part, the axioms secure certain properties of these fixing formulas. The axiom system is also equipped with two infinitary rules of inference stating that every half-plane can be fixed in reference to a given coordinate frame and that every region is a Boolean combination of some half-planes. Let us finally note that our main result in the present article closely mimicks that described in Theorem \ref{thm:ian}.

\section{Beyond two dimensions}

What can be known about the rational models of dimensions greater than two? Even at this stage, one can indeed make some statements about the relations among such models. Recall the well-known Helly's theorem.

\btw[Helly]

Let $A$ be a finite class of $N$ convex sets in $\mathbb{R}^n$ such that $N \geq n+1$ and each $n+1$-element subclass of $A$ has a non-empty
intersection. Then all $N$ elements of $A$ have a non-empty intersection.

\etw

\begin{figure}[h]

\centering
\includegraphics[scale=0.4]{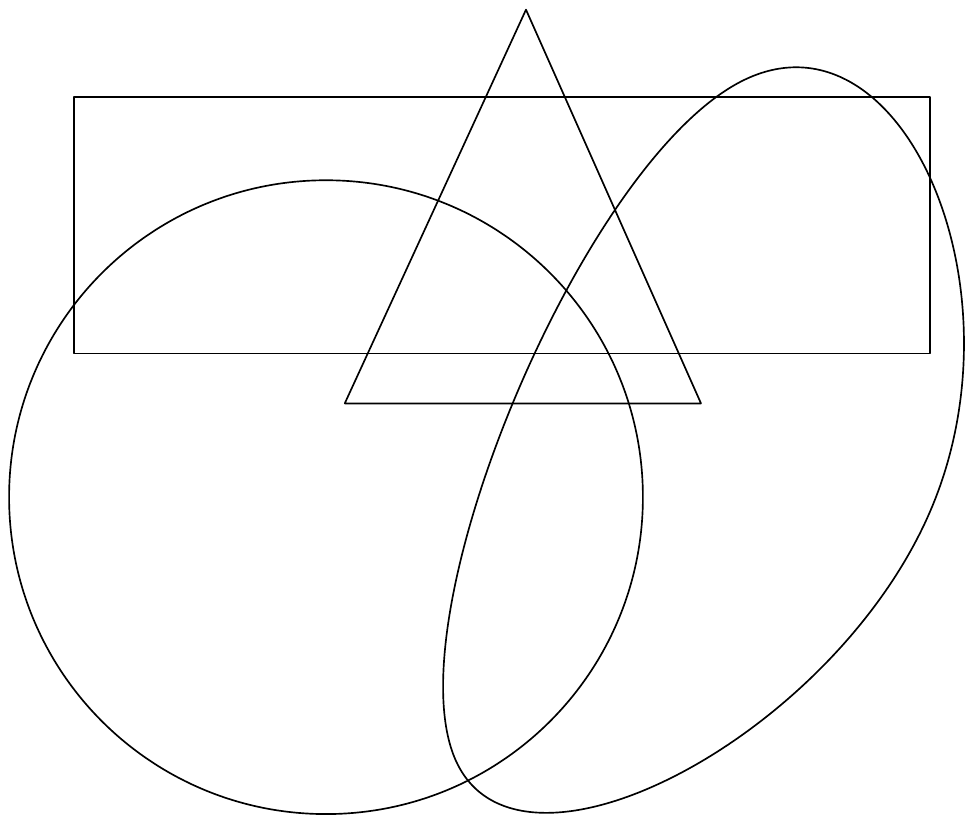}\caption[Helly's Theorem.]{A
very simple example of Helly's Theorem in $\mathbb{R}^2$.}

\end{figure}

First off, note that for all $n$, we have the following easy result.

\btw\label{thm:boolean}

Let $\mf{M}^n$ be a rational polygonal model. Then, the Boolean operators: product ($\cdot$), sum ($+$) and complement ($-$) are definable in $\mf{M}^n$. Moreover the top (1) and bottom (0) elements are also definable.

\etw

\begin{proof} For simplicity, we shall represent such formulas in their (infix) form as $x \cdot y$, $x + y$ and $-x$ and top and bottom as $1$ and $0$, instead of more correct but cumbersome standard notation as formulas in our language (which is what they really are), abusing the symbol of equality to also render situations like $x\cdot y = 0$. Now for the definitions of formulas. The following formula is satisfiable in $\mf{M}^n$ if and only if the region represented by the variable $m$ is the product of the regions represented by $x$ and $y$ respectively: $(m \leq x) \wedge (m \leq y) \wedge \forall w (w \leq x \wedge w \leq y \to w \leq m)$. An analogous formula can be constructed for the sum (all easily expressible in our language). The top can be defined as $\forall y (y \leq x)$ and the bottom as $\forall y (y = x \vee \lnot (y \leq x))$. Moreover, the following formula is satisfiable in $\mf{M}^n$ if and only if the region represented by $y$ is the complement of the region represented by $x$: $y\cdot x = 0 \wedge \forall w (w \cdot x = 0 \to w \leq y)$.\end{proof}

Now, consider the following formula $\phi(x_1,\ldots,x_N) :=$  $$\displaystyle\bigwedge_{I \subseteq 2^S} \prod_{i \in I} x_i \neq 0 \wedge \displaystyle \bigwedge_{1 \leq j \leq N} \mf{conv}(x_j) \to \displaystyle \prod_{1 \leq j \leq N} x_j \neq 0,$$

\n where $S = \{1, \ldots, N \}$.\\

\n This formula\footnote{We use $\prod$ and $\sum$ as abbreviations for finite products and sums, respectively. We also use $\pm a$ to denote region $a$ or its complement.} ``says'' in any $n$-dimensional model that regions $r_1, \ldots, r_N$ have non-empty intersection if each $r_j$ is convex and for every subset of $\{ r_1, \ldots, r_N\}$, its elements have a non-empty intersection.

\btw

For a given $n$ there exists a set of formulas $\Phi_n$ expressing the Helly's theorem in $\mf{M}^n$.

\etw

To see that this is the case, consider $\phi_{N} := \forall x_1 \ldots \forall x_N \phi(x_1,\ldots,x_N)$. We define $\Phi_n = \{ \phi_{N} \st  N \geq n+1 \mbox{ and } |I| = n+1\}$.\\

Recall that the theory of a structure is the set of all sentences valid in that structure.

\btw

The theory of $\mf{M}^n \neq \mf{M}^{n+1}$ for all $n$.

\etw

To see that this is the case, observe that for some $\phi_{N} \in \Phi_n$ we have $\mf{M}^n \models \phi_{N}$ but $\mf{M}^{n+1} \not \models \phi_{N}$. \\

Therefore, we can say that these models are indeed \emph{different}. However, if one were to extend the axiomatisation results from \cite{trybus} to dimensions greater than two --- and this is indeed our long-term goal --- the models should be also shown to be \emph{similar} in some other respect. Namely, the first task would be to see whether the notion of a coordinate frame can be expressed in such models in general.

\section{Three dimensions}

\paragraph{Basic expressivity} First of all, notice that the formula $\mf{conv}(x) \wedge \mf{conv}(-x)$ is satisfiable in $\mf{M}^3$ only by regions that are half-spaces. Since the plane bounding such half-spaces is unique, this also allow us to talk indirectly about such planes. It is convenient to be able to talk about a number of different half-spaces (planes); hence we introduce the following abbreviation. $$\mf{hs}_{n}(x_1,\ldots,x_n) := \displaystyle\bigwedge_{1 \leq i \leq n} \mf{conv}(x_i) \wedge \mf{conv}(-x_i) \wedge \displaystyle\bigwedge_{\substack{1 \leq i \leq n, \\ 1 \leq j \leq n, \\
i \neq j}} x_i \neq x_j \wedge x_i \neq -x_j$$

\n Next, we see that we can talk about parallel planes by means of the following formula. $$ \mf{hs}_{2}(x,y) \wedge ( (x \cdot y = 0 \vee x \cdot -y = 0) \vee (-x \cdot y = 0 \vee -x \cdot -y = 0) )$$

\n This formula ``says'' in $\mf{M}^3$ that the two regions are distinct half-spaces (with distinct bounding planes) and that it is either that the first region has a non-empty intersection with the other or that its complement has this property. (Note that the main disjunction in the brackets operates really as an exclusive ``or'' when the two half-spaces are different.) Similarly, the following formula expresses the fact that two planes meet in the single line. Since such lines are unique, we can also --- albeit indirectly --- talk about lines in $\mf{M}^3$. $$\mf{line}(x,y) := \mf{hs}_{2}(x,y) \wedge \lnot ( (x \cdot y = 0 \vee x \cdot -y = 0) \vee (-x \cdot y = 0 \vee -x \cdot -y = 0) )$$

Finally, consider the case when the following is satisfied: $$\mf{line}(y_1,y_2) \wedge \mf{line}(y_1,y_3) \wedge \mf{line}(y_2,y_3)$$ (we thus assume this piece of formalism to be a part of all formulas described in the remainder of this paragraph). This formula simply says that all the planes bounding the three half-space have a non-empty intersection with each other. Consider the following three configurations:

\begin{description}

\item[(i)] \textbf{a fan:} where all the planes meet in a single line;


  
\item[(ii)] \textbf{a prism:} where two of the planes meet in a line not on the third plane and meet the third plane in two separate, parallel lines;


\item[(iii)] \textbf{a corner:} where two of the planes meet the third plane in two separate, non-parallel lines and meet each other in a line that passes through the third plane.



\end{description}

Since in all the above cases, the number of domains into which the entire space is being partitioned changes (6 domains for a fan, 7 for a prism and 8 for a corner) and it can be expressed in terms of products of respective half-spaces or their complements, one can build formulas describing all three cases in $\mf{M}^3$. Noting that there are $8$ non-empty intersections possible in total, in the case of a corner, one enforces a non-empty intersection of all the half-spaces by adding the formula $$\lnot \exists x \lnot \exists y \lnot \exists z (((x = y_1 \vee x = -y_1) \wedge (y = y_2 \vee y = -y_2) \wedge (z = y_3 \vee z = -y_3))$$ $$\wedge (x \cdot y \cdot z = 0))$$

Directly, and assuming that $\mf{line}(y_1,y_2) \wedge \mf{line}(y_1,y_3) \wedge \mf{line}(y_2,y_3)$ is satisfied, this formula ``says'' that any three (out of six in total --- remember we alway have a half-space and its complement) half-spaces bounded by some planes have a non-empty intersection. 

Next, in the case of a prism, one simply adds the formula
$$\exists x \exists y \exists z (((x = y_1 \vee x = -y_1) \wedge (y = y_2 \vee y = -y_2) \wedge (z = y_3 \vee z = -y_3))$$ $$ \wedge (x \cdot y \cdot z = 0))$$

\noindent forcing the existence of a non-empty intersection. However, when paired with
$$\lnot \exists x' \lnot \exists y' \lnot \exists z' (((x' = y_1 \vee x' = -y_1) \wedge (y' = y_2 \vee y' = -y_2) \wedge (z' = y_3 \vee z' = -y_3))$$ $$\wedge (x \neq x' \vee y \neq y' \vee z \neq z') \wedge (x' \cdot y' \cdot z' = 0)).$$

\n Thus, the end effect is \emph{only} one empty intersection. Finally, in the case of a fan, one has to force precisely two non-empty intersections. This is done by stringing together the following:

$$\exists x \exists y \exists z (((x = y_1 \vee x = -y_1) \wedge (y = y_2 \vee y = -y_2) \wedge (z = y_3 \vee z = -y_3))$$ $$ \wedge (x \cdot y \cdot z = 0))$$

and

$$\exists x' \exists y' \exists z' (((x' = y_1 \vee x' = -y_1) \wedge (y' = y_2 \vee y' = -y_2) \wedge (z' = y_3 \vee z' = -y_3))$$ $$ \wedge (x' \cdot y' \cdot z' = 0))$$

with the condition that
$$(x \neq x' \vee y \neq y' \vee z \neq z')$$

together with
$$\lnot \exists x'' \lnot \exists y'' \lnot \exists z''$$ $$(((x'' = y_1 \vee x'' = -y_1) \wedge (y'' = y_2 \vee y'' = -y_2) \wedge (z'' = y_3 \vee z'' = -y_3))$$ $$\wedge ((x \neq x'' \vee y \neq y'' \vee z \neq z'') \vee (x' \neq x'' \vee y' \neq y'' \vee z' \neq z'')) \wedge (x'' \cdot y'' \cdot z'' = 0))$$

The above are admittedly long-winded but relatively simple and repetitive formulas. We hide the details under the self-explanatory abbreviations $\mf{fan}(x,y,z)$, $\mf{prism}(x,y,z)$ and $\mf{corner}(x,y,z)$.\\

Having established this, let us note that the case (iii) provides a basis for a coordinate frame. For the remainder of this section, we focus on fleshing out one of the ways of defining a coordinate frame in $\mf{M}^3$. Consider the formula $\mf{frame}(y_1,y_2,y_3,y') := $
$$\mf{corner}(y_1,y_2,y_3) \wedge \mf{line}(y_1,y') \wedge \mf{line}(y_2,y') \wedge \mf{line}(y_3,y')$$
It is satisfiable by a tuple of elements $a_1, a_2, a_3, a'$ only when these are half-spaces such that the planes bounding the first three of them form a corner and the plane bounding the fourth half-space form a prism with each pair of these planes. The three lines that lie at the pairwise intersections of the planes $a_1, a_2, a_3$ will determine the \emph{axes} of the coordinate frame. Next, $a'$ meets the remaining planes at three distinct lines that intersect pairwise on each of the axes: the points of intersection of each pair of such lines and an axis will be marked with a point, called the \emph{unit of measurement} (akin to the points \textbf{I} and \textbf{J} in Figure \ref{fig:1} but for all the three planes involved). Thus, there are three axes and three units of measurement.

\paragraph{Addition and multiplication}

Consider two planes intersecting a third one in two lines. These lines are coincident, if the planes themselves are. Let us assume that $\mf{line}(y_1,y) \wedge \mf{line}(y_2,y)$ is satisfied. By adding
$$y_1 = y_2 \vee y_1 = -y_2$$

we define a relevant formula, denoted $\mf{coincident}_2(y_1,y_2,y)$.\footnote{$\mf{coincident}_2$ should be understood as defining coincidence in two dimensions. Similarly for other notions used in this paragraph.} Similarly, such lines are parallel, if the planes are. Therefore if we add $$\lnot (y_1 = y_2 \vee y_1 = -y_2)$$ $$\wedge (y_1 \cdot y_2 = 0 \vee y_1 \cdot -y_2 = 0 \vee -y_1 \cdot y_2 = 0 \vee -y_1 \cdot -y_2 = 0)$$

\n we obtain a formula (denoted $\mf{parallel}_2(y_1,y_2,y)$) satisfied in $\mf{M}^3$ if and only if the relation of parallelism holds between the respective lines. Also, when lines in a plane are not coincident or parallel, they have to meet in a single point. Thus, we can add the following constraints
$$\lnot \mf{coincident}_2(y_1,y_2,y) \wedge \lnot \mf{parallel}_2(y_1,y_2,y),$$

\n defining a formula $\mf{point}_2(y_1,y_2,y)$ with the obvious interpretation. We need these expressivity results to define important operations on line segments found on the planes forming the coordinate frame. We start with defining addition in a plane (following \cite{Bennett:1995}):

\bdefi
We say that $\overline{\mathbf{OC}}$ is the result of the \emph{addition} of $\overline{\mathbf{OA}}$ and $\overline{\mathbf{OB}}$ and write  
$ \overline{\mathbf{OA}} + \overline{\mathbf{OB}} = \overline{\mathbf{OC}}$ if and only if the following lines can be found (see Fig. \ref{fig:add}):

\bit

\item [ (a) ] $l_1$, $l_3$ meeting at a point $\mathbf{O}$;

\item [ (b) ] $m$ parallel to $l_3$;

\item [ (c) ] $l_A$, meeting $l_3$ at a point $\mathbf{A}$ and parallel or coincident with $l_1$;

\item [ (d) ] $l_B$, meeting $l_3$ at a point $\mathbf{B}$ and such that $l_B, l_1, m$ meet at a single point $\mathbf{J}$;

\item [ (e) ] $l_C$, meeting $l_3$ at a point $\mathbf{C}$ and parallel or coincident with $l_B$ and such that $l_A, l_C$ and $m$ meet at a single point $\mathbf{M}$.

\eit 

\edefi

\begin{figure}[h]
\centering
\includegraphics[scale=0.55]{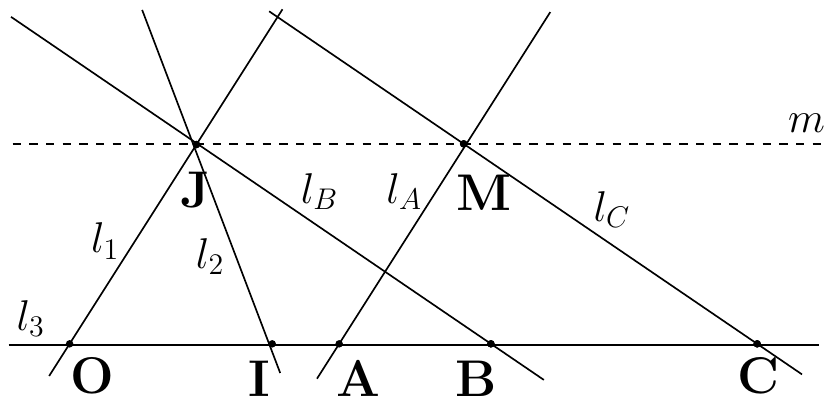}\caption{$ \overline{\mathbf{OA}} + \overline{\mathbf{OB}} = \overline{\mathbf{OC}}$.}\label{fig:add}

\end{figure}

Say, for simplicity, that we adopted the same notational conventions for objects in the structure $\mf{M}^3$ (remembering that these can refer to the relevant lines only indirectly and that directly these denote objects in the three-dimensional space!). Furthermore, let $$\mf{M}^3 \models \mf{frame}[l_1,l_3,n,n'],$$ for some $n, n'$ (with the plane bounding $n'$ meeting the plane bounding $n$ in the line $l_2$, thus defining the unit of measurement): this takes care of (a).\footnote{Note that our entire two-dimensional construction is `happening' in the plane bounding $n$.} We can enforce (b) by $$\mf{M}^3 \models \mf{parallel}[m,l_3,n];$$ (c) with $$\mf{M}^3 \models \mf{point}_2[l_A,l_3,n] \wedge (\mf{parallel}_2[l_A,l_1,n] \vee \mf{coincident}_2[l_A,l_1,n]);$$ (d) with $$\mf{M}^3 \models \mf{point}_2[l_B,l_3,n] \wedge \mf{corner}_2[l_B,l_1,m];$$ and (e) with $$\mf{M}^3 \models \mf{point}_2[l_C,l_3,n] \wedge (\mf{parallel}_2[l_C,l_B,n] \vee \mf{coincident}_2[l_C,l_B,n]).$$ Thus, we can construct a formula $\mf{add}_2(y_1,y_3,y_A,y_B,y_C,y,y',z)$ such that $\mf{M}^3 \models \mf{add}_2[l_1,l_3,l_A,l_B,l_C,n,n',m]$ if and only if $ \overline{\mathbf{OA}} + \overline{\mathbf{OB}} = \overline{\mathbf{OC}}$ (assuming the naming conventions above).

Similarly, again after \cite{Bennett:1995}, let us define multiplication in a plane.

\bdefi
We say that $\overline{\mathbf{OC}}$ is the result of \emph{multiplication} of $\overline{\mathbf{OA}}$ and $\overline{\mathbf{OB}}$ and write $ \overline{\mathbf{OA}} \cdot \overline{\mathbf{OB}} = \overline{\mathbf{OC}}$ if and only if the following lines can be found (see Fig. \ref{fig:mult}):

\bit

\item [ (a) ] $l_1$, $l_3$ meeting at a point $\mathbf{O}$ and $l_2$ meeting $l_1$ at a point $\mathbf{J}$ and $l_3$ at a point $\mathbf{I}$;

\item [ (b) ] $l_A$ meeting $l_3$ at a point $\mathbf{A}$ and parallel or coincident with $l_2$;

\item [ (c) ] $l_B$ meeting $l_3$ at a point $\mathbf{B}$ and such that $l_B, l_1, l_2$ meet at a single point ($\mathbf{J}$);

\item [ (d) ] $l_C$ meeting $l_3$ at a point $\mathbf{C}$, parallel or coincident with $l_B$ and such that $l_C, l_A, l_1$ meet at a single point $\mathbf{M}$.

\eit

\edefi

\begin{figure}[h]
\centering
\includegraphics[scale=0.65]{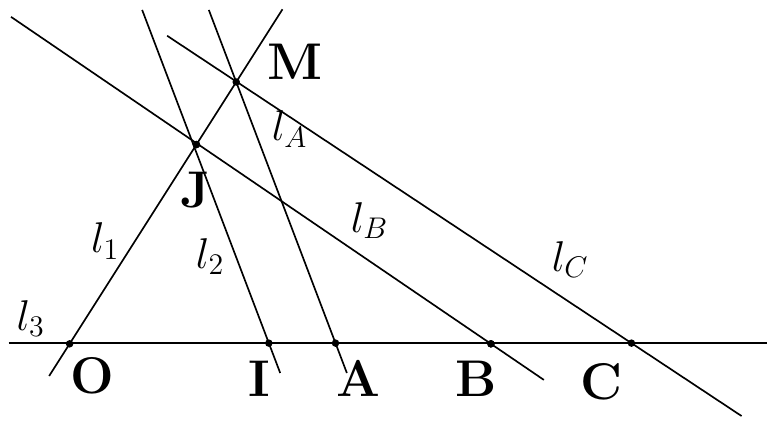}\caption{$ \overline{\mathbf{OA}} \cdot \overline{\mathbf{OB}} = \overline{\mathbf{OC}}$.}\label{fig:mult}

\end{figure}

\n Given the similarity of the constraints for multiplication to those for addition, it should be clear now that there is a formula $\mf{multiply}_2(y_1,y_3,y_A,y_B,y_C,y,y')$ such that $\mf{M}^3 \models \mf{multiply}_2[l_1,l_3,l_A,l_B,l_C,n,n']$ if and only if $ \overline{\mathbf{OA}} \cdot \overline{\mathbf{OB}} = \overline{\mathbf{OC}}$ (assuming the naming conventions above).

\begin{cor}

Addition and multiplication are definable in every plane bounding the half-spaces used in defining a coordinate frame in $\mf{M}^3$.  

\end{cor}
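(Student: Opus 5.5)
The plan is to derive the corollary directly from the two constructions preceding it, $\mf{add}_2$ and $\mf{multiply}_2$, by checking that they apply to each of the three planes of a frame. Fix a tuple $a_1,a_2,a_3,a'$ with $\mf{M}^3\models\mf{frame}[a_1,a_2,a_3,a']$. The constructions above were carried out in the plane bounding $n$, with $l_1,l_3$ the half-spaces whose bounding planes cut that plane in two of the axes, and $n'$ supplying the unit line $l_2$. Since $\mf{frame}$ is symmetric in $y_1,y_2,y_3$ — $\mf{corner}$ is defined by counting nonempty octant intersections, which ignores order, and the three $\mf{line}$ conjuncts with $y'$ are symmetric — any of $a_1,a_2,a_3$ can play the role of $n$. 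The other two then play $l_1,l_3$, and $a'$ plays $n'$. So it suffices to treat one plane, say the one bounding $a_3$, and then permute.

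For that plane I would do three things.

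\begin{itemize}
\item Identify the geometric data. The bounding planes of $a_1,a_2$ meet the plane of $a_3$ in two non-parallel lines (the corner condition), meeting at $\mathbf{O}$. The plane of $a'$ meets the plane of $a_3$ in a line $l_2$ which, by $\mf{line}(a_1,a')$, $\mf{line}(a_2,a')$ and the prism clause, crosses both axes away from $\mathbf{O}$, giving the points $\mathbf{I},\mathbf{J}$. So $l_1,l_2,l_3$ form a coordinate frame in the sense of Definition \ref{def:crf} inside that plane.
\item Check the auxiliary predicates. $\mf{point}_2$, $\mf{parallel}_2$ and $\mf{coincident}_2$, relativised to the plane of $a_3$, correctly express incidence and parallelism of the traced lines. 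Two distinct planes each cutting a fixed plane in a line give parallel traces iff the planes are parallel, and equal traces iff the planes coincide — or, more carefully, iff they belong to a fan together with that plane. Every line of the plane of $a_3$ is the trace of some rational half-space; the plane through it perpendicular or tilted with rational coefficients will do. So quantifying over half-spaces ranges over all rational lines of that plane.
\item Conclude. With the data in place, conditions (a)--(e) for addition and (a)--(d) for multiplication translate conjunct by conjunct, exactly as displayed before the corollary. Hence $\mf{add}_2$ and $\mf{multiply}_2$ define the operations in that plane. Permuting the roles of $a_1,a_2,a_3$ gives the other two planes.
\end{itemize}

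The main obstacle is the correctness of the plane-relative predicates. In particular, $\mf{coincident}_2$ as written ($y_1=\pm y_2$) is too strong: two different planes can have the same trace on the plane of $n$. I would repair this by replacing it with $y_1=\pm y_2 \vee \mf{fan}(y_1,y_2,y)$. Similarly, $\mf{parallel}_2$ should allow non-parallel planes whose traces are parallel, which is the prism configuration. So I would use $\mf{parallel}(y_1,y_2)\vee\mf{prism}(y_1,y_2,y)$ with the prism's empty cell arranged so that $y$ is the odd plane out. The concurrency condition "three lines meet in one point" used in (d) and (e), written there as $\mf{corner}_2$, would similarly be expressed as: the three planes together with the plane of $n$ do not have all traces pairwise distinct-and-meeting except at a common point. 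Concretely, I would express it via a fourth half-space through that point, or by counting the cells cut out by four planes. Once these predicates are confirmed to pick out the intended trace relations, the rest is the routine transcription above.
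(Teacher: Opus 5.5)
Your overall route is the paper's. Its entire proof is the remark that one must ``change what counts as the plane of reference'', which is your observation that $\mf{frame}$ is symmetric in $y_1,y_2,y_3$, followed by transcribing $\mf{add}_2$ and $\mf{multiply}_2$ in the chosen plane. Your audit of the plane-relative predicates goes beyond the paper and is correct. As written, $\mf{coincident}_2$ misses the fan case and $\mf{parallel}_2$ misses the prism case; your disjunctions repair both, and $\mf{point}_2$ then becomes simply $\mf{corner}(y_1,y_2,y)$. Nothing has to be ``arranged'' in the prism case: three planes meeting pairwise in three parallel lines form a configuration symmetric in all three, and that is exactly what $\mf{prism}(y_1,y_2,y)$ says.

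The step that fails is in your first bullet, namely that the trace $l_2$ of the plane of $a'$ crosses both axes away from $\mathbf{O}$. Write $P_i$ and $P'$ for the planes bounding $a_i$ and $a'$. The formula $\mf{frame}$ only contains the conjuncts $\mf{line}(y_i,y')$, which allow $P'$ to pass through $\mathbf{O}$ or to be parallel to an axis. In that case, in $P_3$ either $\mathbf{I}=\mathbf{J}=\mathbf{O}$ or one of $\mathbf{I},\mathbf{J}$ does not exist, and $\mf{multiply}_2$ no longer defines multiplication. Addition survives, since it never uses $l_2$. The ``prism clause'' you invoke is not part of the formula, and the paper's gloss it comes from cannot be what is meant. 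If $P'$ formed a prism with $P_1,P_3$, then $l_2$ would be parallel to the axis $P_1\cap P_3$ instead of crossing it. Moreover, no plane forms a prism with all three pairs, since it would have to be parallel to three independent directions.

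What you need is that $P_1,P_2,P_3,P'$ be in general position, and this is expressible. Add $\mf{hs}_4(y_1,y_2,y_3,y')$ and require that exactly one of the sixteen products $\pm y_1\cdot\pm y_2\cdot\pm y_3\cdot\pm y'$ be $0$. This works because fifteen cells is the maximum for four planes and is attained only in general position. Each face then carries a genuine triangle frame. Since this condition is symmetric in all four variables, your permutation argument covers the plane of $a'$ as well.

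Finally, do not rely on cell counting for the concurrency conditions in (d) and (e). Four planes cut out fourteen cells both when they are concurrent and when a transversal plane cuts a prism. Your auxiliary half-space idea does work. Suppose the traces $t_1,t_m$ of $y_1,y_m$ on the reference plane meet in a point, and the trace $t_B$ of $y_B$ differs from both. Then $t_B$ passes through $t_1\cap t_m$ iff $\exists w\,(\mf{fan}(w,y_1,y_m)\wedge\mf{coincident}_2(w,y_B,y))$. The witness $w$ is bounded by the plane spanned by $t_B$ and the line where the planes of $y_1$ and $y_m$ meet.
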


To obtain this simple consequence one needs to change what counts as the plane of reference (where these operations are defined by means of the above-described formulas).

\paragraph{Affine completeness} In this part, we show how to obtain results analogous to those presented in \cite{Pratt:1999} with regard to $\mf{M}^2$. Assume for now that we work in a specified plane of reference with the coordinate frame defined as above. How one would go about actually defining the numbers on the $x$-axis (the $y$-axis being analogous)? Well, we can start by defining these in terms of distance. So $0$ would be $\overline{\mathbf{OO}}$ and $1$ would be $\overline{\mathbf{OI}}$ with the rest of the natural numbers obtained by ``repeating'' the construction of $\overline{\mathbf{OI}}$. This is in fact how things are done in \cite{Pratt:1999} and \cite{trybus}, so the interested reader is encouraged to consult these sources. In this article, however, we propose a slightly different solution, using the fact that addition is expressible to define a \emph{successor} formula instead. Say, we defined $0$ as the point of intersection of the two axes, and $1$ as the point of intersection of $l_3$ (assuming previous conventions) with the $x$-axis represented by $l_1$. The successor formula can be defined so that, starting with $\overline{\mathbf{OO}}$ as the base case (this is done by simply constructing a formula satisfiable only by those regions whose associated lines determine the same point as the lines determining the axes), one keeps on adding $\overline{\mathbf{OI}}$. More formally (yet still, with a lot of ancillary details left out for readability), assume that a line $m$ crosses the line $l_1$ at a point $\mathbf{M}$, such that $\overline{\mathbf{OM}} = n\overline{\mathbf{OI}}, n \in \mathbb{N}$. We then define the successor formula that involves two important elements: the regions representing $m$ and another line $m'$, such that $m'$ is the result of adding $\overline{\mathbf{OM}}$ and $\overline{\mathbf{OI}}$ (clearly doable, by the above).\footnote{It should be obvious at this point that the theory of $\mf{M}^3$ is undecidable.} This takes care of the natural numbers on the line. The following result shows that we can extend this to any rational number.

\btw\label{thm:8}

Assuming the coordinate frame setup above and all the introduced shorthands, let $m$ be a line crossing the axis at a point $\mathbf{M}$. Then there exists a formula satisfiable in $\mf{M}^3$ if and only if $\overline{\mathbf{OM}} = n\overline{\mathbf{OI}}, n \in \mathbb{Q}$.

\etw

\begin{proof} For the proof of the above, the case when $n \in \mathbb{N}$ has been outlined. Consider $n = \frac{p}{q}$, with $p,q \in \mathbb{N}$ (these do not have to be relatively prime). We get $q\overline{\mathbf{OM}} = p\overline{\mathbf{OI}}$, that is, in our parlance, $\overline{\mathbf{OQ}} \cdot \overline{\mathbf{OM}} = \overline{\mathbf{OP}} \cdot \overline{\mathbf{OI}}$. $\overline{\mathbf{OP}}$ and $\overline{\mathbf{OQ}}$ are clearly expressible using the successor formula. The formula capturing the above equality must simply enforce that the same point ($\mathbf{M}$) is the result of both multiplications (and multiplications are expressible). What remains is the case when $n$ is negative. It is enough to make a copy of the triangle forming the coordinate frame's units of measurement (clearly doable) and perform the operations ``in reverse''.\end{proof}

Given any line in a plane, and any coordinate frame, this line can cross both axes at some points; cross one of the axis and be parallel to the other; cross through the origin and be either parallel or cross the line forming the units of measurements at some point; or equal one of the formulas involved in the construction of the coordinate frame. Any intersection points can be captured numerically using the formulas described in the outline of the proof of Theorem \ref{thm:8} (if needs be, changing what counts as \emph{the} axis, see \cite{trybus}) and the remaining parallel cases can also be dealt with as parallelism is expressible in our language.

\btw\label{thm:9}

Let $h,h' \in ROQ(\mathbb{R}^3)$ be half-spaces. Then there is a formula satisfiable in $\mf{M}^3$, such that (1) $h$ satisfies this formula and (2) if $h'$ satisfies the formula, then $h' = h$. 

\etw

\begin{proof} Repeating the construction from Theorem \ref{thm:9} on the other planes of reference means that the resulting compound formula fixes the bounding plane of $h$, yielding $h' = h$ or $h' = -h$. The final disambiguation can be done by insisting that there is no half-space that is contained in $h$ and not contained in $h'$.\end{proof}

Such formulas are sometimes called \emph{fixing} formulas. Note that this notion can be extended from half-planes to arbitrary regions from the domain. In addition we obtain an analogue of the result from \cite{Pratt:1999}. Let us say that a formula is \emph{affine-complete} (in $\mf{M}^3$), if for any two regions $r, r' \in ROQ(\mathbb{R}^3)$ satisfying it, there is an affine transformation mapping $r$ to $r'$ (and, of course, vice versa). Our final result is that --- just as in the case of $\mf{M}^2$ --- every region satisfies an affine-complete formula.   

\btw

Every $r \in ROQ(\mathbb{R}^3)$ satisfies an affine-complete formula in $\mf{M}^3$.

\etw

\begin{proof}Recall that every element of $ROQ(\mathbb{R}^3)$ can be represented as a Boolean combination of half-spaces. Consider a formula stating that certain regions form a coordinate frame (as done above), fixing all the half-spaces that are involved in the construction of $r$ with respect to the resulting coordinate frame (again, as outlined above), and describing the exact Boolean combination resulting in $r$ (clearly expressible). Such a formula has $m+n$ free variables, where $m$ is the number of variables involved in the construction of the coordinate frame and the remaining $n$ variables representing the lines fixed with respect to the coordinate frame (for simplicity, allowing for repetitions of variables in both groups). Consider now an two $m+n$-tuples satisfying this formula. We show that the elements from both tuples are affine equivalent.
Recall that all tetrahedra (essentially: corners in our terminology) are affine-equivalent. Therefore, there is a (unique) affine transformation, say $\tau$, taking the $m$-elements of the first tuple to the $m$-elements of the other. Moreover, since the remaining $n$ half-spaces from the second tuple are fixed with respect to the coordinate frame, by Theorem \ref{thm:9}, the $\tau$-transformed $n$ half-spaces from the first tuple must be the same as the $n$ half-spaces from the second tuple. That is, these are also affine equivalent. Therefore, $\tau$ takes all the elements from the first $m+n$-tuple to the second one. The final formula existentially binds all the variables apart from the one representing $r$.\end{proof}

This result can be easily extended from a single region to formulas of arbitrary arity, as in \cite{Pratt:1999}, thus providing an exact match to Theorem \ref{thm:ian} mentioned above.

\section{Open Problems}

Thus, the stage is set for the task of axiomatising the theory of $\mf{M}^3$. This might be no easy feat, considering how much simpler it is to talk about coordinate frames in $\mf{M}^2$ compared to $\mf{M}^3$. If in the due process, some regularities regarding the constructions are observed, this could be the basis for extending the results to other dimensions. Even at this stage we can note, however, that what has been presented in this paper regarding $\mf{M}^3$ can be most likely extended to any dimension beyond $2$.

\bibliographystyle{unsrtnat}
\bibliography{references}  






\end{document}